\documentclass[12pt]{article}
\usepackage{indentfirst}
\usepackage{multirow}
\usepackage{cite}
\usepackage{amsthm}
\usepackage{amsmath}
\usepackage{amssymb}
\usepackage{amsopn}
\usepackage{a4wide}
\usepackage[dvipdfm,
            colorlinks=true,
            unicode,
            linkcolor=blue,
            citecolor=blue,
            urlcolor=blue,
            pagebackref=true,
            pdfpagemode=none,
            pdfstartview=FitH
            ]{hyperref}
\usepackage{color}

\DeclareMathOperator{\wds}{WDS}

\begin{document}

\newtheorem{theorem}{Theorem}
\newtheorem{corollary}{Corollary}
\newtheorem{definition}{Definition}
\newtheorem{lemma}{Lemma}

\title{
A Complete Method for Checking Hurwitz Stability of a Polytope of Matrices
\footnote{
Partially supported by a National Key Basic Research Project of China
(2004CB318000) and by National Natural Science Foundation of China
(10571095)
}
}
\date{}

\author{
Junwei Shao, Xiaorong Hou\footnote{The author to whom all correspondence should be sent.}\\
\textit{\footnotesize School of Automation Engineering,
University of Electronic Science and Technology of China, Sichuan, PRC}\\
\textit{\footnotesize E-mail: \href{mailto:junweishao@gmail.com}{junweishao@gmail.com},
\href{mailto:houxr@uestc.edu.cn}{houxr@uestc.edu.cn} }
}

\maketitle

\noindent\textbf{Abstract:} We present a novel method for checking the
Hurwitz stability of a polytope of matrices. First we prove that
the polytope matrix is stable if and only if two homogenous polynomials
are positive on a simplex,
then through a newly proposed method, i.e., the weighted difference substitution method,
the latter can be checked in finite steps. Examples show the efficiency of our method.
\\[2ex]
\textbf{Key words:}
polytope of matrices; Hurwitz stability\\[2ex]

\section{Introduction}
Given a linear time-invariant system $\dot{x}(t) = A x(t)$,
it is asymptotically stable if the system matrix $A$ is Hurwitz stable,
i.e., all eigenvalues of A have negative real parts.
Sometimes, we need to consider a type of system uncertainty
in which case the family of system matrices forms a polytope,
i.e., $A$ is varying in
\begin{equation} \label{eqn:polytopeofmatrices}
\mathbf{A} = \left\{
\sum\limits_{k=1}^m q_k A_k:  \sum\limits_{k=1}^m q_k = 1, q_k \geq 0 \mbox{ for all k}
\right\},
\end{equation}
where $A_1,\ldots,A_m \in \mathbb{R}^{n \times n}$ are constant matrices.
We say that $\mathbf{A}$ is robustly Hurwitz stable if each matrix in $\mathbf{A}$ is Hurwitz stable.
The stability of a polytope of matrices cannot be derived from
the stability of all its edges \cite{barmish:1},
which is the case of the stability of a polytope of polynomials \cite{bartlett:1}.
In fact, \cite{cobb:1} proved that for a polytope of $n \times n$ matrices,
the stability of all $2n-4$ dimensional faces can guarantee the stability of the polytope,
and the number $2n-4$ is minimal.
But checking the stability of $2n-4$ dimensional faces of a polytope is also a difficult task.
For a matrix polytope with normal vertex matrices,
\cite{wang:1} proved that the stability of vertex matrices
is necessary and sufficient for the stability of the whole polytope.
Meanwhile,
serval sufficient criterions \cite{fang:1,qian:1,stipanovic:1,dzhafarov:1,dzhafarov:2,oliveira:1,leite:1}
are provided to check stability of matrix polytopes.

Based on the newly proposed results on checking positivity of forms (i.e., homogenous polynomials),
we present a method for checking the stability of a polytope of matrices in this paper,
this method is complete, moreover, it only has a power exponential complexity.

\section{Notations}
\begin{itemize}
\item
$\mathbb{R}$: the field of real numbers.

\item
$\mathbb{Z}$: the set of all integers.

\item
$\mathbb{N}$: the set of all nonnegative integers.

\item
$\mathbb{R}^{m \times n}$: the space of $m \times n$ real matrices.

\item
$I_n$: the identity matrix of order $n$.

\item
$\det A$: the determinant of a square matrix $A$.

\item
$\Delta_f$: the Hurwitz matrix of the polynomial
$f(x) = a_n x^n + a_{n-1} x^{n-1} \ldots + a_0$,
it is an $n \times n$ matrix defined as
$$
\Delta_f = \left(
\begin{array}{lllll}
a_{n-1} & a_{n-3} & a_{n-5} & \cdots & 0\\
a_{n} & a_{n-2} & a_{n-4} & \cdots & 0\\
0 & a_{n-1} & a_{n-3} & \cdots & 0\\
0 & a_{n} & a_{n-2} & \cdots & 0\\
\multicolumn{5}{c}{\dotfill}\\
\multicolumn{5}{c}{\dotfill}
\end{array}
\right).
$$
The successive principal minors of $\Delta_f$ are denoted by $\Delta_k, k=1,2,\ldots,n$.

\item
$(k_1 k_2 \ldots k_m)$: a permutation of $\{1,2,\ldots,m\}$, which changes $i$ to $k_i, i=1,\ldots,m$.

\item
$\Theta_m$: the set of all $m!$ permutations of $\{1,2,\ldots,m\}$.

\item
$\deg(f)$: the degree of a polynomial.

\item
$A^T$: the transpose of a matrix or vector $A$.

\item
$S_{m}$: the $m-1$ dimensional simplex in $\mathbb{R}^m$, i.e.,
$$
S_{m} = \{ (x_1,\ldots,x_{m}): \sum\limits_{i=1}^{m} x_i = 1, x_i \geq 0, i=1,\ldots,m \}.
$$

\item
$[x]$: the largest integer not exceeding the number $x$.

\end{itemize}

\section{Main Results} \label{sec:mainresults}
Suppose
$$
A = \sum\limits_{k=1}^m q_k A_k
$$
is a matrix in $\mathbf{A}$,
denote its characteristic polynomial by
\begin{equation} \label{eqn:charpoly}
f_A(s) \triangleq \det(s I_n - A)=s^n + a_{n-1} s^{n-1} + \ldots + a_1 s + a_0.
\end{equation}
It is well known that, for all $1 \leq i \leq n$,
$(-1)^i a_{n-i}$ equals the sum of all principle minors of order $i$ of $A$,
hence $a_{n-i}$ is a form of degree $i$ on $q_1,q_2,\ldots,q_m$.
Denote by $b_{ij}$ the $(i,j)$th entry of the Hurwitz matrix of $f_A(s)$, then
$b_{ij} =0$, or $b_{ij}$ is a form of degree $2j-i$ on $q_1,q_2,\ldots,q_m$.
Since
$$
\Delta_k = \sum\limits_{(j_1j_2 \ldots j_k) \in \Theta_k} \pm b_{1 j_1} b_{2 j_2} \cdots b_{k j_k},
$$
and
$$
b_{1 j_1} b_{2 j_2} \cdots b_{k j_k} = 0,
$$
or
\begin{equation}
\deg(b_{1 j_1} b_{2 j_2} \cdots b_{k j_k})
= \sum\limits_{i=1}^k (2 j_{i}- i)
= \sum\limits_{i=1}^k i
=  \frac{k(k+1)}{2},\nonumber
\end{equation}
we can see that $\Delta_k$ is a form of degree $k(k+1)/2$ on $q_1,q_2,\ldots,q_m$.

\begin{theorem} \label{lemma:condpolynom}
There exists a Hurwitz stable matrix in the matrix polytope $\mathbf{A}$,
then $\mathbf{A}$ is Hurwitz stable if and only if
\begin{equation} \label{eqn:pstvforms}
\Delta_{n-1}(q_1,\ldots,q_m) > 0 \mbox{ and } a_0(q_1,\ldots,q_m) > 0,  \quad (q_1,\ldots,q_m) \in S_m.
\end{equation}
\end{theorem}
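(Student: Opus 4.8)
The plan is to combine the classical Hurwitz (Routh--Hurwitz) criterion with a connectedness argument over the simplex $S_m$: the two form-positivity conditions will be shown to guarantee that no matrix in $\mathbf A$ has an eigenvalue on the imaginary axis, and a homotopy anchored at the known stable point then forces every matrix to be stable. For the ``only if'' direction, assume every $A(q)=\sum_k q_k A_k$, $q\in S_m$, is Hurwitz stable. By the Hurwitz criterion all leading principal minors $\Delta_1,\dots,\Delta_n$ of the Hurwitz matrix of $f_{A(q)}$ are positive, so in particular $\Delta_{n-1}(q)>0$; moreover $a_0(q)=f_{A(q)}(0)=\prod_j(-\lambda_j)$ with $\lambda_j$ the eigenvalues of $A(q)$, which is positive after pairing complex conjugates and using $\mathrm{Re}\,\lambda_j<0$. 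Hence (\ref{eqn:pstvforms}) holds on all of $S_m$.

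For the ``if'' direction I would first record the elementary fact that, for a real monic polynomial $f(s)=s^n+a_{n-1}s^{n-1}+\cdots+a_0$ with $a_0\neq 0$ and $\Delta_{n-1}\neq 0$, $f$ has no root on the imaginary axis. Indeed $f(0)=a_0\neq 0$ excludes a root at the origin, and a nonzero purely imaginary root $i\omega$ would be accompanied by $-i\omega$ (real coefficients), so $f$ would have two roots whose sum is $0$; by Orlando's formula $\Delta_{n-1}$ equals, up to a nonzero constant, $\prod_{i<j}(\lambda_i+\lambda_j)$, so this would force $\Delta_{n-1}=0$, a contradiction. Reading this with $f=f_{A(q)}$, hypothesis (\ref{eqn:pstvforms}) says precisely that no $A(q)$, $q\in S_m$, has an eigenvalue with zero real part.

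Now fix $q^*\in S_m$ with $A(q^*)$ Hurwitz stable, pick any $q\in S_m$, and use convexity of $S_m$ to join them by the segment $q(t)=(1-t)q^*+tq\in S_m$, $t\in[0,1]$. Then $A(t):=\sum_k q(t)_kA_k$ is a continuous (indeed polynomial) path of matrices, so the coefficients of $f_{A(t)}$, and hence its $n$ roots counted with multiplicity, vary continuously with $t$. Let $N(t)$ be the number of eigenvalues of $A(t)$ in the open right half-plane. By the previous paragraph no eigenvalue ever touches the imaginary axis along the path, so no root can pass from one half-plane to the other; therefore $N$ is constant on $[0,1]$. Since $A(0)=A(q^*)$ is stable we get $N(0)=0$, hence $N(1)=0$, i.e.\ all $n$ eigenvalues of $A(q)$ lie in the open left half-plane and $A(q)$ is Hurwitz stable. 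As $q\in S_m$ was arbitrary, $\mathbf A$ is Hurwitz stable.

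The one nonroutine ingredient is the characterization ``$\Delta_{n-1}=0$ iff $f$ has two roots summing to zero'' (Orlando's formula), which is what links the second-to-last Hurwitz determinant to imaginary-axis eigenvalues; everything else is convexity of the simplex, continuity of eigenvalues, and the observation that the standing hypothesis ``$\mathbf A$ contains a Hurwitz stable matrix'' is exactly what pins the invariant $N$ to $0$ — without it, conditions (\ref{eqn:pstvforms}) would be equally compatible with every $A(q)$ being antistable, so that hypothesis cannot be dropped.
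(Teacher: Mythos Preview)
Your proof is correct and follows essentially the same route as the paper: both derive necessity from the Routh--Hurwitz criterion, and for sufficiency both use Orlando's formula together with $a_0\neq 0$ to exclude imaginary-axis eigenvalues, then invoke continuity over the connected simplex (anchored at the given stable matrix) to conclude. The only difference is presentational: the paper compresses your homotopy argument into the single sentence ``by continuity there must exist a matrix $A$ in $\mathbf{A}$ which has eigenvalues lying on the imaginary axis,'' whereas you spell out the path $q(t)$ and the invariance of the eigenvalue count $N(t)$ explicitly.
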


\begin{proof}
The necessity is directly from Routh-Hurwitz criterion.
If $\mathbf{A}$ is not Hurwitz stable,
then by continuity there must exist
a matrix $A$ in $\mathbf{A}$ which has eigenvalues lying on the imaginary axis.
Suppose eigenvalues of $A$ are $s_1,\ldots,s_n$.
If some $s_i$ equals zero, then
$$
a_0 = (-1)^n s_1 s_2 \cdots s_n = 0,
$$
which contradicts the hypothesis $a_0>0$.
If some $s_i$ and $s_j$ are a pair of conjugate eigenvalues of $A$ on the imaginary axis,
then from Orlando's formula \cite{gantmacher:1},
$$
\Delta_{n-1} = (-1)^{\frac{n(n-1)}{2}} \prod_{1 \leq i < j \leq n} (s_i + s_j) = 0,
$$
which contradicts the hypothesis $\Delta_{n-1} > 0$.
\end{proof}

Remark:
\cite{qian:1} also proved that a polytope matrix $\mathbf{A}$ is robustly stable
if and only if an associated form $\gamma \det \widetilde{A}$
is positive on a simplex, where $\widetilde{A}$ is the Kronecker sum of the matrix $A$
in $\mathbf{A}$ with itself, and $\gamma$ is the sign of $\det \widetilde{A}$.
$\widetilde{A}$ is an $n^2 \times n^2$ matrix, hence the degree of $\gamma \det \widetilde{A}$
is much larger than $\Delta_{n-1}$ and $a_0$ in \eqref{eqn:pstvforms}.

A newly proposed method,
i.e. the difference substitution method \cite{yang:1,yang:2},
can be used to check positivity of forms efficiently.
\cite{yao:1} proved that a form is positive
if and only if we can get forms with positive coefficients after
finite steps of varied forms of substitutions,
i.e., weighted difference substitutions(WDS).
\cite{hou:1} further gave a bound for the number of steps required,
and pointed out that
the WDS method is complete in checking positivity or nonnegativity of integral forms.
We will introduce this method more detailedly in Section \ref{sec:wdsmethod}.
Based on this method, we have

\begin{theorem} \label{thm:complexity}
Suppose entries of $A_k,k=1,\ldots,m$ in \eqref{eqn:polytopeofmatrices} are all rational,
the magnitudes of coefficients of $\Delta_{n-1}$ and $a_0$ in \eqref{eqn:pstvforms} are bounded by $M$,
then the Hurwitz stability of $\mathbf{A}$ can be checked
by an algorithm with complexity
$$
O\left(
m^{m+1} n^{2 m^2} ( n^{2m} \ln M + n^{2(m+1)} \ln m + 2(m^2+m n^{2m}) \ln n )
\right).
$$

\end{theorem}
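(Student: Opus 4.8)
The plan is to reduce the stability question, via Theorem~\ref{lemma:condpolynom}, to checking positivity of the two forms $\Delta_{n-1}$ and $a_0$ on $S_m$, and then to bound the cost of the WDS procedure of~\cite{yao:1,hou:1} applied to each of these forms. The complexity estimate will therefore be assembled from three separate cost contributions, and the final bound is their sum (absorbed into a single $O(\cdot)$): (i) the cost of forming the forms $\Delta_{n-1}(q_1,\ldots,q_m)$ and $a_0(q_1,\ldots,q_m)$ symbolically from the rational data $A_1,\ldots,A_m$; (ii) the number of WDS steps needed to certify positivity (or detect non-positivity), which by~\cite{hou:1} is a function of the degree $d$ of the form, the number of variables $m$, and the bound $M$ on the coefficient magnitudes; and (iii) the arithmetic cost of carrying out each WDS step on a form with the relevant number of monomials. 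I would first record the degrees: the discussion preceding Theorem~\ref{lemma:condpolynom} shows $\deg(a_0)=n$ and $\deg(\Delta_{n-1}) = (n-1)n/2$, so the governing degree for the dominant form is $d = \Theta(n^2)$; it is this $d$ that produces the $n^{2m^2}$-type factors below, since a form of degree $d$ in $m$ variables has on the order of $d^{m}$ monomials and the WDS bound from~\cite{hou:1} is roughly $d^{\,m^2}$-many substitution rounds.

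Next I would make each contribution precise. For step~(ii), I invoke the explicit step-count bound of~\cite{hou:1}: for an integral form of degree $d$ in $m$ variables with coefficients bounded by $M$, WDS terminates after a number of rounds that is at most (essentially) $N(d,m,M)$, a quantity polynomial in $m$ and in $\ln M$ but with a $d^{\,m^2}$ (equivalently $n^{2m^2}$, since $d \le n^2$) factor; the $m^{m+1}$ prefactor in the statement comes from the fact that a single WDS round replaces the simplex by $m!$ sub-simplices and tracks one weighted substitution per permutation in $\Theta_m$, and $m! = O(m^{m+1/2})$. For step~(iii), each WDS round transforms a polynomial with at most $O(d^{m}) = O(n^{2m})$ monomials by a linear change of variables; the coefficients grow, and after the full run they are bounded by something like $M \cdot (\text{something})^{N}$, so each coefficient has bit-length $O(n^{2m}\ln M + n^{2(m+1)}\ln m + (m^2 + m n^{2m})\ln n)$ — this is exactly the parenthesized factor in the claimed bound, where the $\ln M$ term tracks the initial coefficients, the $\ln m$ term tracks the entries of the substitution matrices, and the $\ln n$ terms track the degree-induced growth and the number of variables in the exponent vectors. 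Multiplying the number of rounds by the per-round cost (number of monomials times the bit-complexity of one coefficient update, which is quasi-linear in the bit-length) and simplifying gives the stated $O\bigl(m^{m+1} n^{2m^2}(n^{2m}\ln M + n^{2(m+1)}\ln m + 2(m^2 + m n^{2m})\ln n)\bigr)$. Step~(i), forming $\Delta_{n-1}$ and $a_0$, costs $O(\mathrm{poly}(n,m))$ with much smaller coefficient blow-up and is dominated.

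I expect the main obstacle to be pinning down the coefficient-growth bound under iterated weighted difference substitutions tightly enough that the resulting bit-length matches the parenthesized expression exactly — this requires careful bookkeeping of how the entries of the successive substitution matrices (which themselves depend on the weights chosen at each step) compound, and a clean argument that the growth is multiplicative per round with a per-round factor controlled by $m$, $n$, and the degree. A secondary subtlety is justifying that one may assume integral coefficients without loss of generality (clearing denominators from the rational entries of the $A_k$, which only changes $M$ by a polynomial-in-the-input factor and hence does not affect the asymptotic form of the bound, since $\ln M$ appears only linearly). Once those two points are handled, the remaining estimates are routine counting of monomials and summation of a geometric-type series for the number of rounds, and the theorem follows by combining them with Theorem~\ref{lemma:condpolynom}.
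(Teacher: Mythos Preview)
Your decomposition into (i) building the forms, (ii) bounding the number of WDS rounds, and (iii) bounding the per-round cost is the right skeleton, but you have the roles of the two large factors exactly reversed, and this leads to a concrete factual error about the WDS bound from~\cite{hou:1}.

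You claim that the termination bound $N(d,m,M)$ for WDS carries a $d^{\,m^2}$ factor, and that this is where the $n^{2m^2}$ in the theorem comes from. It does not. The explicit bound $C_p(M,m,d)$ from~\cite{hou:1} (quoted in the paper as Theorem~\ref{thm:poster}) has only $d^{m}$ and $d^{m+1}$ in the exponents inside the logarithm; after dividing by $\ln m-\ln(m-1)\sim 1/m$ one gets
\[
C_p(M,m,d)=O\bigl(m\bigl(d^{m}\ln M+d^{m+1}\ln m+(m^{2}+md^{m})\ln d\bigr)\bigr),
\]
and substituting $d\le n(n-1)/2\le n^{2}$ yields precisely $m$ times the parenthesized expression in the statement. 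So the parenthesized factor is the \emph{step count}, not a coefficient bit-length as you suggest. Conversely, the $m^{m}\,n^{2m^{2}}$ prefactor is the \emph{per-step arithmetic cost}: a form of degree $d$ in $m$ variables has at most $(d+1)^{m-1}$ monomials, substituting a linear form into each monomial produces at most $(d+1)^{m(m-1)}$ terms, and there are $m!\le m^{m}$ permutations per round, giving $m^{m}(d+1)^{m^{2}}\le m^{m}n^{2m^{2}}$ arithmetic operations. Multiplying per-step cost by $C_p$ gives the stated bound; the extra factor of $m$ (making $m^{m+1}$) comes from $C_p$, not from $m!$.

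Two further remarks. First, the paper's complexity is in \emph{arithmetic operations}, not bit operations; your entire step~(iii) bit-length discussion is a different cost model and would, if carried out, give a different (larger) bound. Second, your ``main obstacle'' --- tracking coefficient growth through iterated substitutions --- is simply not part of the argument here; the only place coefficient size enters is through the hypothesis bound $M$ fed into $C_p$. Once you reassign the factors correctly, the proof is a three-line multiplication with no delicate bookkeeping.
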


\section{Positivity of forms on simplices} \label{sec:wdsmethod}
In this section, we will introduce the WDS
method for checking positivity of forms \cite{hou:1},
and analyze the complexity of checking Hurwitz stability of matrix polytopes through this method.

Suppose $\theta=(k_1 k_2 \ldots k_m) \in \Theta_m$,
let $P_{\theta}=(p_{ij})_{m \times m}$ be the permutation matrix corresponding
$\theta$, that is
$$
p_{ij}=\left\{
\begin{array}{ll}
1, & j=k_i\\
0, & j \neq k_i
\end{array}
\right..
$$
Given $T_m \in \mathbb{R} ^{m \times m}$, where
\begin{equation}
T_m = \left(
\begin{array}{cccc}
1 & \frac{1}{2}& \ldots & \frac{1}{m} \\[2pt]
0 & \frac{1}{2} & \ldots & \frac{1}{m}\\[2pt]
\vdots & \ddots& \ddots& \vdots\\[2pt]
0 & \ldots & 0 & \frac{1}{m}
\end{array}
\right),
\end{equation}
let
$$A_{\theta}=P_{\theta}T_m,$$
and call it the WDS matrix
determined by the permutation $\theta$.
The variable substitution $\mathbf{x} = A_{\theta} \mathbf{y}$ corresponding
$\theta$ is called a WDS,
where $\mathbf{x} = (x_1,x_2,\ldots,x_m)^T, \mathbf{y} = (y_1,y_2,\ldots,y_m)^T$.

Let $f(\mathbf{x}) \in \mathbb{R}[x_1,x_2,\ldots,x_m]$ be a form, we call
\begin{equation}
\wds(f)=\bigcup\limits_{\theta \in \Theta_m} \{ f(A_{\theta} \mathbf{x}) \}
\end{equation}
the WDS set of $f$,
\begin{equation}
\wds^{(k)}(f)=\bigcup\limits_{\theta_k \in \Theta_m} \cdots \bigcup\limits_{\theta_1 \in \Theta_m}
\{ f(A_{\theta_k} \cdots A_{\theta_1} \mathbf{x}) \}
\end{equation}
the $k$th WDS set of $f$ for positive integer $k$, and set $\wds^{(0)}(f)=\{ f \}$.

Let $\alpha=(\alpha_1,\alpha_2,\ldots,\alpha_m) \in \mathbb{N}^{m}$,
$|\alpha|=\alpha_1+\alpha_2+ \cdots +\alpha_m$.
For a form of degree $d$
$$
f(x_1,x_2,\ldots,x_m)=\sum\limits_{|\alpha|=d} c_\alpha x_1^{\alpha_1}x_2^{\alpha_2}\cdots x_m^{\alpha_m},
$$
if all coefficients $c_\alpha$ are nonzero, we say $f$ has complete monomials.

It is obvious that if there exists $k \in \mathbb{N}$, such that forms in $\wds^{(k)}(f)$
all have complete monomials, and their coefficients are all positive,
then $f$ is positive on $S_m$.
In fact, the reverse is also true, and for integral forms,
the upper bound for $k$ can also be estimated.

\begin{theorem}[\cite{hou:1}] \label{thm:poster}
Suppose $f \in \mathbb{Z}[x_1, x_2, \ldots, x_m]$ is a form of degree $d$,
and the magnitudes of its coefficients are bounded by $M$,
then $f$ is positive on $S_m$,
if and only if there exists $k \leq C_p(M,m,d)$,
such that each form in $\wds^{(k)}(f)$ has complete monomials, and its
coefficients are all positive, where
\begin{equation} \label{eqn:cp}
C_p(M,m,d)= \left[ \dfrac{ \ln \left( 2^{d^m} M^{d^m+1} m^{d^{m+1}+d} d^{(m+1)d+m d^m} (d+1)^{(m-1)(m+2)} \right)}
{\ln m - \ln (m-1)} \right] + 2
\end{equation}
\end{theorem}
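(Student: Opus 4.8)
The plan is to prove both implications, with essentially all of the content concentrated in the necessity (``only if'') direction and its quantitative bound. The sufficiency is immediate: each $A_{\theta}=P_{\theta}T_m$ has nonnegative entries and columns lying on $S_m$, so the composite substitution $B=A_{\theta_k}\cdots A_{\theta_1}$ maps $S_m$ into a sub-simplex of $S_m$. If every form $f(B\mathbf{x})$ in $\wds^{(k)}(f)$ has complete monomials with positive coefficients, then $f(B\mathbf{x})>0$ on the whole nonnegative orthant; since the images of these substitutions cover $S_m$, we conclude $f>0$ on $S_m$.

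First I would record the geometric meaning of a WDS. The $j$th column of $T_m$ is the barycenter of the face spanned by $e_1,\dots,e_j$, so applying $A_{\theta}$ produces exactly one cell of the barycentric subdivision of $S_m$, and $\wds^{(k)}(f)$ ranges over the $(m!)^k$ cells of the $k$th barycentric subdivision. A standard estimate then gives that every such cell has diameter at most $\left(\tfrac{m-1}{m}\right)^{k}\operatorname{diam}(S_m)$; the contraction factor $\tfrac{m-1}{m}$ is precisely what will yield the denominator $\ln m-\ln(m-1)$ in \eqref{eqn:cp}. Next I would set up the dictionary between coefficients and Bernstein data: on each cell the monomial coefficients of $f(B\mathbf{x})$ agree, up to the positive multinomial factors $\binom{d}{\beta}$, with the Bernstein--Bezier ordinates of $f$ relative to that cell, so having complete positive monomials is equivalent to positivity of the entire Bezier net over the cell.

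With this dictionary in place the necessity reduces to two quantitative inputs. The analytic input is the second-order convergence of Bezier nets under subdivision: on a cell of diameter $\delta$ each ordinate differs from the value of $f$ at its associated node by at most a constant times $\|D^2 f\|_{S_m}\,\delta^{2}$, and $\|D^2 f\|_{S_m}$ is controlled by $M$, $d$ and the number $\binom{m+d-1}{d}$ of monomials. The arithmetic input is an effective positivity gap: I would show that a form $f\in\mathbb{Z}[x_1,\dots,x_m]$ of degree $d$ with coefficients bounded by $M$ that is strictly positive on $S_m$ satisfies $\min_{S_m}f\ge\mu_0(M,m,d)$ for an explicit $\mu_0>0$, obtained by locating the minimum on each face at an interior critical point, eliminating variables, and bounding the resulting algebraic number away from zero by Cauchy/Mahler-type estimates. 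Combining the two, I would choose $k$ so that the Bezier approximation error on every cell is below $\mu_0$; since every node value exceeds $\mu_0$, every ordinate, hence every monomial coefficient, is then positive. Taking logarithms in the inequality $\left(\tfrac{m-1}{m}\right)^{k}\operatorname{diam}(S_m)<\sqrt{\mu_0/(C\|D^2 f\|_{S_m})}$ and solving for $k$ produces the bound $k\le C_p(M,m,d)$.

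The main obstacle is the effective gap $\mu_0$. The elimination step forces iterated resultants whose degrees multiply, and it is exactly this that manufactures the exponents $d^m$ and $M^{d^m+1}$ in the numerator of \eqref{eqn:cp}; getting those exponents to match the stated constant, rather than merely some explicit function of $M,m,d$, requires careful tracking of Bezout degrees and coefficient growth through the elimination, together with a correspondingly sharp bound on $\|D^2 f\|_{S_m}$. By comparison, the geometric contraction of the barycentric cells and the second-order Bezier convergence are routine.
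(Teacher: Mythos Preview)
There is no proof in this paper for you to be compared against. Theorem~\ref{thm:poster} is quoted verbatim from the external reference \cite{hou:1} (note the attribution in the theorem heading), and the paper supplies no argument for it; the proof environment that follows the theorem is explicitly labeled as the proof of Theorem~\ref{thm:complexity}, not of Theorem~\ref{thm:poster}. So any comparison with ``the paper's own proof'' is vacuous here.

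As for the outline itself: the sufficiency paragraph and the barycentric-subdivision/diameter-contraction picture are correct and standard, and the reduction of ``complete positive monomials'' to positivity of the B\'ezier net on each cell is the right dictionary. Where your proposal is only a sketch is exactly where you flag it: the effective lower bound $\mu_0(M,m,d)$ for $\min_{S_m} f$ via iterated resultants, and the matching of the resulting exponents to the specific constant $C_p(M,m,d)$ in \eqref{eqn:cp}. You do not actually carry out that computation, and the second-order B\'ezier estimate together with a resultant-based gap would a priori produce a bound of the same general shape but not obviously the exact numerator in \eqref{eqn:cp}. So the proposal is a plausible strategy rather than a proof of the stated inequality; to close it you would need to reproduce the detailed estimates of \cite{hou:1}, which this paper does not contain.
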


Remark:
The $C_p(M,m,d)$ in \eqref{eqn:cp} provides a theoretical upper bound
of the number of steps of substitutions required to check positivity of an integral form.
In practice, numbers of steps used are generally much smaller than this bound \cite{yao:1}.

\begin{proof}[\textbf{proof of Theorem \ref{thm:complexity}}]
A form $f(q_1,\ldots,q_m)$ of degree $d$ has at most
$$
{d+m-1 \choose m-1} \leq (d+1)^{m-1}
$$
monomials
thus the number of arithmetic operations of computing $\wds(f)$
is bounded by
\begin{equation}
m! (d+1)^{m-1} (d+1)^{m(m-1)} \leq m^m (d+1)^{m^2}.
\end{equation}
Moreover
$$
C_p(M,m,d) = O\left(m ( d^m \ln M + d^{m+1} \ln m + (m^2+m d^m) \ln d )\right),
$$
and $\deg(\Delta_{n-1}) = n(n-1)/2, \deg(a_0) = n$,
therefore the complexity of our method for checking the robust Hurwitz stability
of a polytope of $n \times n$ matrices is
$$
O\left(
m^{m+1} n^{2 m^2} ( n^{2m} \ln M + n^{2(m+1)} \ln m + 2(m^2+m n^{2m}) \ln n )
\right).
$$
\end{proof}

\section{Examples}
First we will illustrate our method through an example from \cite{barmish:1}.
Suppose $\mathbf{A}$ is a polytope of following matrices
$$
A_1=\left(
\begin {array}{ccc} -1&0&1\\ \noalign{\medskip}0&-1&0
\\ \noalign{\medskip}-1&0& 0.1\end {array}
\right),
$$
$$
A_2=\left(
\begin {array}{ccc} -1&0&0\\ \noalign{\medskip}0&-1&1
\\ \noalign{\medskip}0&-1& 0.1\end {array}
\right),
$$
$$
A_3=\left(
\begin {array}{ccc} -1&0&-1\\ \noalign{\medskip}0&-1&-1
\\ \noalign{\medskip}1&1& 0.1\end {array}
\right),
$$
let $A = q_1 A_1 + q_2 A_2 + q_3 A_3$,
and denote by $f_A(s)$ the characteristic polynomial of $A$.
The second successive principle minor of the Hurwitz matrix of $f_A(s)$ is
\begin{equation}
\begin{split}
\Delta_2 = &
{\frac {63}{25}}\,{q_{{1}}}^{3}+{\frac {99}{25}}\,{q_{{1}}}^{2}q_{{3}}
+{\frac {243}{50}}\,{q_{{3}}}^{2}q_{{1}}+{\frac {144}{25}}\,q_{{1}}{q_
{{2}}}^{2}+{\frac {153}{25}}\,q_{{1}}q_{{2}}q_{{3}} \\
& +{\frac {144}{25}}
\,{q_{{1}}}^{2}q_{{2}}+{\frac {243}{50}}\,q_{{2}}{q_{{3}}}^{2}+{\frac
{63}{25}}\,{q_{{2}}}^{3}+{\frac {99}{25}}\,{q_{{2}}}^{2}q_{{3}}+{
\frac {171}{50}}\,{q_{{3}}}^{3},
\end{split} \nonumber
\end{equation}
which is obviously positive on $S_3$.
The constant term of $f_A(s)$ is
\begin{equation}
\begin{split}
a_0 = &
{\frac {9}{10}}\,{q_{{1}}}^{3}+{\frac {9}{10}}\,{q_{{2}}}^{3}-{\frac {
23}{5}}\,q_{{1}}q_{{2}}q_{{3}}-{\frac {13}{10}}\,{q_{{1}}}^{2}q_{{3}}+
{\frac {7}{10}}\,{q_{{1}}}^{2}q_{{2}} \\
& -3/10\,q_{{2}}{q_{{3}}}^{2}-3/10
\,{q_{{3}}}^{2}q_{{1}}+{\frac {7}{10}}\,q_{{1}}{q_{{2}}}^{2} -{\frac {
13}{10}}\,{q_{{2}}}^{2}q_{{3}}+{\frac {19}{10}}\,{q_{{3}}}^{3}.
\end{split}
\end{equation}
$a_0$ is not positive on $S_3$ since
the following form (with a difference of a positive constant factor)
belongs to $\wds^{(3)}(a_0)$ and its coefficients are all negative:
\begin{equation}
\begin{split}
& -6516\,q_{{1}}q_{{2}}q_{{3}}-1296\,{q_{{1}}}^{3}-891\,{q_{{2}}}^{3}-
3888\,{q_{{1}}}^{2}q_{{3}}-3888\,{q_{{1}}}^{2}q_{{2}}-1568\,q_{{2}}{q_
{{3}}}^{2} \\
& -2828\,{q_{{3}}}^{2}q_{{1}}-3483\,q_{{1}}{q_{{2}}}^{2}-2223
\,{q_{{2}}}^{2}q_{{3}}-236\,{q_{{3}}}^{3},
\end{split} \nonumber
\end{equation}
therefore the polytope $\mathbf{A}$ is
not Hurwitz stable.

Furthermore, we have checked $900$ polytopes of matrices for $n=2,3,4$ and $m=2,3,4$,
i.e., $100$ polytopes for each pair $(n,m)$.
The vertexes of these polytopes are generated following a similar method
as was described in \cite{leite:1}: their entries are real numbers with $4$ significant numbers
and uniformly distributed in the interval $[-1,1]$,
moreover the maximal real parts of their eigenvalues equal $-0.0001$
(if not so, a shift should be performed).
Table \ref{table:time} shows the time used to check the stability of these polytopes
on a computer equipped with Intel Core 2 Duo E4500 CPU at 2.2 GHz and 4.0 GB of RAM memory,
our program have been implemented in the computer algebra system Maple.

\begin{table}[h]
\centering
\caption{Time used to check robust Hurwitz stability of 100 polytopes for each pair $(n,m)$}
\label{table:time}
\begin{tabular}{c|c|c|c}
\hline
$n$   & $m$ &   number of stable / unstable polytopes & total time (in seconds) \\ \hline
\multirow{3}*{2} & 2 & 67 / 33  & 0.125   \\
& 3 & 29 / 71 & 1.578  \\
& 4 & 11 / 89 & 2537.360  \\
\hline
\multirow{3}*{3} & 2 & 58 / 42 & 0.123  \\
& 3 & 28 / 72  & 2.009  \\
& 4 & 9 / 91 & 665.129  \\
\hline
\multirow{3}*{4} & 2 & 54 / 46  & 0.090  \\
& 3 & 21 / 79 & 2.608  \\
& 4 & 4 / 96 & 1894.602  \\
\hline
\end{tabular}
\end{table}

Remark:
Generally, the time gets longer with the increase of $n$ or $m$,
but there are some extreme examples that
very long time may be spent even for small $n$ and $m$.
For example, in our experiment, it takes 1129.656 and 1172.234 seconds respectively
to check the Hurwitz stability of two matrix polytopes generated
for $(n,m)=(2,4)$, this makes the time corresponding to $(n,m)=(2,4)$ much longer
than that corresponding to $(n,m)=(3,4)$ or $(n,m)=(4,4)$ in Table \ref{table:time}.


\begin{thebibliography}{99}
\bibitem{barmish:1}
B. Ross Barmish, M. Fu, S. Saleh.
Stability of a Polytope of Matrices: Counterexamples.
IEEE Transactions On Automatic Control, 33 (6) (1988) 569-572.

\bibitem{bartlett:1}
A. C. Bartlett, C. V. Hollot, Huang Lin.
Root Locations of an Entire Polytope of Polynomials: It Suffices to Check the Edges,
Math. Control Signals Systems 1 (1988) 61-71.

\bibitem{cobb:1}
J. D. Cobb, C. L. Demarco.
The Minimal Dimension of Stable Faces Required to Guarantee Stability of a Matrix Polytope.
IEEE Transactions On Automatic Control, 34 (9) (1989) 990-992.

\bibitem{wang:1}
Q. Wang.
Necessary and Sufficient Conditions for Stability of a Matrix Polytope with Normal Vertex Matrices.
Automatica, 27 (5) (1991) 887-888.

\bibitem{qian:1}
R. X. Qian, C. L. DeMarco.
An Approach to Robust Stability of Matrix Polytopes Through Copositive Homogeneous Polynomials.
IEEE Transactions On Automatic Control, 37 (1992) 848-852.

\bibitem{fang:1}
Y. Fang, K. A. Loparo, X. Feng,
A sufficient condition for stability of a polytope of matrices.
Systems \& Control Letters, 23 (1994) 237-245.

\bibitem{stipanovic:1}
D. M. Stipanovi\'{c}, D. D. \v{S}iljak.
Stability of Polytopic systems via convex M-matrices and parameter-dependent Liapunov functions.
Nonlinear Analysis, 40 (2000) 589-609.

\bibitem{dzhafarov:1}
V. Dzhafarov, T. B\"{u}y\"{u}kk\"{o}ro\v{g}lu.
On the stability of a convex set of matrices.
Linear Algebra and its Applications, 414 (2006) 547-559.

\bibitem{dzhafarov:2}
V. Dzhafarov, T. B\"{u}y\"{u}kk\"{o}ro\v{g}lu.
On nonsingularity of a polytope of matrices.
Linear Algebra and its Applications, 429 (2008) 1174-1183.

\bibitem{oliveira:1}
R. C.L.F. Oliveira, P. L. D. Peres.
Stability of polytopes of matrices via affine parameter-dependent Lyapunov functions: Asymptotically exact LMI conditions.
Linear Algebra and its Applications 405 (2005) 209¨C228.

\bibitem{leite:1}
V. J. S. Leite, P. L. D. Peres,
An Improved LMI Condition for Robust D-Stability of Uncertain Polytopic Systems.
IEEE Transactions On Automatic Control, 48 (3) (2003) 500-504.

\bibitem{gantmacher:1}
F. R. Gantmacher.
The Theory of Matrices.
Chelsea Publishing Company, 1960.

\bibitem{yang:1}
Lu Yang,
Difference substitution and automated inequality proving.
Journal of Guangzhou University: Natural Science Edition,
2006, 5(2), 1-7 (in Chinese).

\bibitem{yang:2}
Lu Yang.
Solving Harder Problems with Lesser Mathematics.
Proceedings of the 10th Asian Technology Conference in Mathematics,
ATCM Inc, 2005, 37-46.

\bibitem{yao:1}
Yong Yao.
Termination of the Sequence of SDS Sets and Machine
Decision for Positive Semi-definite Forms.
arXiv: 0904.4030.

\bibitem{hou:1}
X. Hou, J. Shao.
Completeness of the WDS method in Checking Positivity of Integral Forms.
arXiv:0912.1649v1.

\end{thebibliography}
\end{document}